\def\endthebibliography{%
\def\@noitemerr{\@latex@warning{Empty `thebibliography' environment}}%
\endlist
}
\def\BibTeX{{\rm B\kern-.05em{\sc i\kern-.025em b}\kern-.08em
	T\kern-.1667em\lower.7ex\hbox{E}\kern-.125emX}}
\newtheorem{prop}{Proposition}
\newcommand{\vectr}[1]{\boldsymbol{\mathrm{#1}}}
\newcommand{\matr}[1]{\boldsymbol{\mathrm{#1}}}
\newcommand\norm[1]{\left\lVert#1\right\rVert}
\newcommand{\eye}{\matr{I}}
\newcommand{\pilot}{\matr{\Phi}}
\newcommand{\RomanNumeralCaps}[1]{\MakeUppercase{\romannumeral #1}}
\newcommand{\be}{\vectr{e}}
\newcommand{\bc}{\vectr{c}}
\newcommand{\bg}{\vectr{g}}
\newcommand{\medium}{m}
\newcommand{\bW}{\matr{W}}
\newcommand{\ba}{\vectr{a}}
\newcommand{\bE}{\matr{E}}
\newcommand{\bw}{\vectr{w}}
\newcommand{\bX}{\matr{X}}
\newcommand{\bx}{\vectr{x}}
\newcommand{\herm}{\mathsf{H}}
\newcommand{\transp}{\mathsf{T}}
\newcommand{\ap}{\text{AP}}
\newcommand{\trp}{\mathsf{T}}
\newcommand{\realset}[2]{ \mathbb{R}^{#1 \times #2}  }
\newcommand{\complexset}[2]{ \mathbb{C}^{#1 \times #2}  }
\newacronym{4g}{4G}{fourth-generation}
\newacronym{5g}{5G}{fifth-generation}
\newacronym{6g}{6G}{sixth-generation}
\newacronym{2d}{2D}{two-dimensional}
\newacronym{3d}{3D}{three-dimensional}
\newacronym{5gnr}{5G NR}{5G New Radio}
\newacronym{3gpp}{3GPP}{third-generation partnership project}
\newacronym{adc}{ADC}{analog-to-digital converter}
\newacronym{am}{AM}{amplitude modulation}
\newacronym{ambc}{AmBC}{ambient BC}
\newacronym{ap}{AP}{access point}
\newacronym{ar}{AR}{augmented reality}
\newacronym{aoa}{AOA}{angle-of-arrival}
\newacronym{agc}{AGC}{automatic gain control}
\newacronym{awgn}{AWGN}{additive white Gaussian noise}
\newacronym{bc}{BC}{backscatter communication}
\newacronym{bde}{BD}{backscatter device}
\newacronym{bf}{BF}{beamforming}
\newacronym{ber}{BER}{bit error rate}
\newacronym{bs}{BS}{base station}
\newacronym{bibc}{BiBC}{bistatic BC}
\newacronym{ce}{CE}{carrier emitter}
\newacronym{cmimo}{C-MIMO}{collocated MIMO}
\newacronym{cdf}{CDF}{cumulative distribution function}
\newacronym{csi}{CSI}{channel state information}
\newacronym{cvx}{CVX}{convex optimization toolbox}
\newacronym{dac}{DAC}{digital-to-analog converter}
\newacronym{dbf}{DPBF}{dual-polarization BF}
\newacronym{dmimo}{D-MIMO}{distributed MIMO}
\newacronym{dl}{DL}{downlink}
\newacronym{doa}{DOA}{direction-of-arrival}
\newacronym{dli}{DLI}{direct link interference}
\newacronym{dft}{DFT}{discrete Fourier transform}
\newacronym{dtft}{DTFT}{discrete-time Fourier transform}
\newacronym{dp}{DP}{dynamic programming}
\newacronym{en}{EN}{energy neutral}
\newacronym{end}{END}{energy neutral device}
\newacronym{eirp}{EIRP}{effective isotropic radiated power}
\newacronym{etsi}{ETSI}{European Telecommunications Standards Institute}
\newacronym{evd}{EVD}{eigenvalue decomposition}
\newacronym{fdd}{FDD}{frequency-division duplexing}
\newacronym{fdma}{FDMA}{frequency-division multiple access}
\newacronym{fft}{FFT}{fast Fourier transform}
\newacronym{gs}{GS}{grid search}
\newacronym{gd}{GD}{gradient descent}
\newacronym{gsm}{GSM}{Global System for Mobile Communications}  
\newacronym{gna}{GNA}{Girvan-Newman algorithm}
\newacronym{glrt}{GLRT}{generalized log-likelihood ratio test}
\newacronym{icsi}{ICSI}{imperfect CSI}
\newacronym{iot}{IoT}{Internet-of-Things}
\newacronym{iid}{i.i.d.}{independent and identically distributed}
\newacronym{isr}{ISR}{interference-to-signal ratio}
\newacronym{ieee}{IEEE}{Institute of Electrical and Electronics Engineers}
\newacronym{los}{LoS}{line-of-sight}
\newacronym{lora}{LoRa}{long range}
\newacronym{lti}{LTI}{linear time-invariant}
\newacronym{ls}{LS}{least-squares}
\newacronym{lte}{LTE}{Long-Term Evolution}
\newacronym{lan}{LAN}{local area network}
\newacronym{lsb}{LSB}{least significant bit}
\newacronym{m}{m}{meters}
\newacronym{ml}{ML}{maximum-likelihood}
\newacronym{mse}{MSE}{mean square error}
\newacronym{mimo}{MIMO}{multiple-input multiple-output}
\newacronym{mumimo}{MU-MIMO}{multi-user \gls{mimo}}
\newacronym{miso}{MISO}{multiple-input single-output}
\newacronym{mmwave}{mmWave}{millimeter wave}
\newacronym{mmse}{MMSE}{minimum mean square error}
\newacronym{map}{MAP}{maximum a posteriori probability}
\newacronym{mrc}{MRC}{maximum-ratio combining}
\newacronym{mrt}{MRT}{maximum-ratio transmission}
\newacronym{mobc}{MoBC}{monostatic BC}
\newacronym{nb}{NB}{narrowband}
\newacronym{nmse}{NMSE}{normalized mean square error}
\newacronym{nr}{NR}{New Radio}
\newacronym{np}{NP}{Neyman-Pearson}
\newacronym{nfc}{NFC}{near-field communication}
\newacronym{nlos}{NLoS}{non-line-of-sight}
\newacronym{ofdm}{OFDM}{orthogonal frequency division multiplexing}
\newacronym{ofdma}{OFDMA}{orthogonal frequency-division multiple access}
\newacronym{ostbc}{OSTBC}{orthogonal space–time block code}
\newacronym{ota}{OtA}{over-the-air}
\newacronym{ua}{UA}{uncertainty agnostic}
\newacronym{p1}{P1}{Phase \RomanNumeralCaps{1}}
\newacronym{p2}{P2}{Phase \RomanNumeralCaps{2}}
\newacronym{pl}{PL}{path loss}
\newacronym{pana}{PanA}{Panel A}
\newacronym{panb}{PanB}{Panel B}
\newacronym{pgd}{PGD}{projected gradient descent}
\newacronym{ple}{PLE}{path loss exponent}
\newacronym{pcsi}{PCSI}{perfect \gls{csi}}
\newacronym{papr}{PAPR}{peak-to-average power ratio}
\newacronym{pg}{PG}{path gain}
\newacronym{pdf}{pdf}{probability density function}
\newacronym{phy}{PHY}{physical layer}
\newacronym{psd}{PSD}{positive semidefinite}
\newacronym{rcs}{RCS}{radar cross section}
\newacronym{Riss}{RIS}{Reconfigurable intelligent surfaces}
\newacronym{ris}{RIS}{reconfigurable intelligent surface}
\newacronym{riss}{RIS}{reconfigurable intelligent surfaces}
\newacronym{rf}{RF}{radio frequency}
\newacronym{rfid}{RFID}{radio frequency identification}
\newacronym{rms}{RMS}{root mean square}
\newacronym{rss}{RSS}{received signal strength}
\newacronym{rv}{RV}{random variable}
\newacronym{sdma}{SDMA}{space-division multiple access}
\newacronym{sdr}{SDR}{semidefinite relaxation}
\newacronym{si}{SI}{self-interference}
\newacronym{sic}{SIC}{successive interference cancellation}
\newacronym{sumimo}{SU-MIMO}{single-user \gls{mimo}}
\newacronym{svd}{SVD}{singular value decomposition}
\newacronym{smc}{SMC}{specular multipath component}
\newacronym{snr}{SNR}{signal-to-noise ratio}
\newacronym{sinr}{SINR}{signal-to-interference-plus-noise ratio}
\newacronym{sir}{SIR}{signal-to-interference ratio}
\newacronym{siso}{SISO}{single-input single-output}
\newacronym{simo}{SIMO}{single-input multiple-output}
\newacronym{tdoa}{TDOA}{time-difference-of-arrival}
\newacronym{toa}{TOA}{time-of-arrival}
\newacronym{tdd}{TDD}{time division multiplexing}
\newacronym{tdma}{TDMA}{time-division multiple access}
\newacronym{ue}{UE}{user equipment}
\newacronym{ul}{UL}{uplink}
\newacronym{uhf}{UHF}{ultra high frequency}
\newacronym{ula}{ULA}{uniform linear array}
\newacronym{upa}{UPA}{uniform planar array}
\newacronym{ura}{URA}{uniform rectangular array}
\newacronym{uwb}{UWB}{ultrawideband}
\newacronym{zf}{ZF}{zero-forcing}
\newacronym{qam}{QAM}{quadrature amplitude modulation}
\newacronym{qos}{QoS}{Quality of Service}
\newacronym{wlan}{WLAN}{wireless local area network}
\newacronym{wpt}{WPT}{wireless power transfer}
\newacronym{wrt}{w.r.t.}{with respect to}
\newacronym{wb}{WB}{wideband}
\begin{document}
\bstctlcite{IEEEexample:BSTcontrol}

\title{Analysis of Broad Beam Beamforming for Collocated and Distributed MIMO}

\author{\IEEEauthorblockN{Ahmet Kaplan, Diana P. M. Osorio, and Erik G. Larsson}
	\IEEEauthorblockA{Department of Electrical Engineering (lSY), Linköping University, 581 83 Linköping, Sweden.}
\vspace{-20pt}
}


\maketitle
\vspace{-5pt}
\begin{abstract}
Broad beam \gls{bf} design in \gls{mimo} can be convenient for initial access, synchronization, and sensing capabilities in cellular networks by avoiding overheads of sweeping methods while making efficient use of resources.
Phase-only \gls{bf} is key for maximizing power efficiency across antennas.
A successful method to produce broad beams is the phase-only \gls{dbf}. However, its efficiency has not been proved in \gls{nlos}. Therefore, this paper contributes by evaluating \gls{dbf}  in collocated and distributed \gls{mimo} configurations under both \gls{los} and \gls{nlos} channel conditions.
We model the reflection coefficients for different materials in \gls{nlos} conditions and propose the use of orthogonal space-time block code to improve the coverage compared to the \gls{dbf} in \gls{cmimo}.
We further propose a \gls{dbf} method for distributed \gls{mimo} and show that it achieves better coverage than \gls{cmimo} with \gls{dbf}.  
\end{abstract}

\begin{IEEEkeywords}
Beamforming, broad beam, distributed \gls{mimo}, dual-polarized antennas, reflection coefficients. \vspace{-5pt}
\end{IEEEkeywords}
\glsresetall

\section{Introduction}

\Gls{mimo} is a successful technology to enhance data rates, signal quality, coverage, and energy efficiency through \gls{bf}. As the number of antennas in an \gls{ap} increases, user-specific information can be transmitted using narrower beams, improving \gls{snr}. However, in \gls{4g} and \gls{5g} networks, broadcast signals for synchronization and initial access require wide-area coverage.
When integrating sensing into the communication networks, the wide coverage can also be used for target search and detection.

The most common methods to broadcast information are beam sweeping \cite{3gpp_ts_38_802_2017,raghavan2016beamforming} and space-time block codes \cite{li2021construction, meng2016omnidirectional,karlsson2018techniques}, but they require multiple time or frequency slots.
In contrast, a broad beam generated by a multi-antenna system using a single time and frequency slot is a more efficient solution. 

Several \gls{bf} methods have been proposed to generate a broad beam \cite{qiao2016broadbeam,zhang2017energy,sergeev2017enhanced,leifer2016revisiting} by 
adjusting the amplitude and phase of the \gls{bf} coefficients. 
However, varying amplitudes can reduce energy efficiency, as practical systems often use separate power amplifiers for each antenna. To fully utilize the available power, phase-only \gls{bf}, where the coefficients have the same amplitude but differ in phase, is preferred \cite{petersson2020power}.
However, single-polarized antennas using phase-only \gls{bf} cannot achieve broad beams that mimics the radiation pattern of a single antenna using a single time and frequency slot. 

On the other hand, dual-polarized antennas can enable a broad beam using phase-only \gls{bf}. The \gls{dbf} technique, which mimics the single antenna radiation using multiple antennas in \gls{los} channel, was introduced in \cite{petersson2020power, girnyk2020simple}. In \cite{girnyk2020simple}, \gls{bf} coefficients are proposed for two cross-polarized antennas, while \cite{petersson2022energy} and \cite{girnyk2021efficient} extend the \gls{dbf} technique to flexible antenna array sizes. 
In \cite{simonsson2021dual}, \gls{dbf} method is experimentally demonstrated.
The \gls{dbf} method is also extended to \gls{ris} scenarios in \cite{ramezani2024broad,ramezani2023dual} to reflect signals in a broad angle with a dual-polarized \gls{ris}.

However, there are two significant gaps in the existing
literature. First, the performance of the \gls{dbf} technique has not
been thoroughly investigated under \gls{nlos} channel. All studies focus on evaluating \gls{dbf} under \gls{los}, except for~\cite[Fig. 9]{petersson2022energy} that investigates the received power distribution in the urban macro scenario and \cite{simonsson2021dual} that investigates \gls{dbf} by introducing practical experiments. The impact of multipath and attenuation in \gls{nlos} channels remains unexplored, and requires further study to understand the robustness and performance of \gls{dbf}.

Second, there is a lack of research addressing the extension of \gls{dbf} designs for \gls{dmimo} systems. \gls{dmimo} setups, where antennas are spread across a large area, provide substantial benefits in terms of coverage and macro-diversity. Developing a \gls{dbf} strategy for \gls{dmimo} is critical to exploit the advantages of macro-diversity and ensure efficient signal delivery across a wide area.

Our contribution can be summarized as follows:
\begin{itemize}    
    \item Using available reflection coefficient models, we construct a channel model for \gls{nlos} scenario that considers material-dependent reflections, polarization effects, and the directions of the received electric fields.
    \item We analyze the coverage of the \gls{cmimo} with \gls{dbf} and compare the results to those of \gls{cmimo} with \gls{ostbc} both for \gls{nb} and \gls{wb} scenarios. 
    \item We propose a \gls{bf} method to broadcast information in a \gls{dmimo} setup, aiming to enhance the coverage. 
    \item We compare the performances of \gls{cmimo} and \gls{dmimo}, and examine how the macro-diversity provided by distributed antennas impacts the coverage.
\end{itemize}

\begin{figure*}[tbp]
    \centering
    \begin{minipage}{0.31\textwidth}
        \centering
        \includegraphics[width=\linewidth]{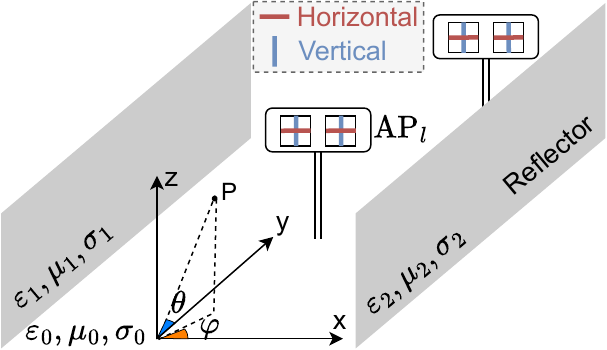}
        \caption{System model.}
        \label{fig:System_Model}
    \end{minipage}
    \hfill
    \begin{minipage}{0.30\textwidth}
        \centering
        \includegraphics[width=\linewidth]{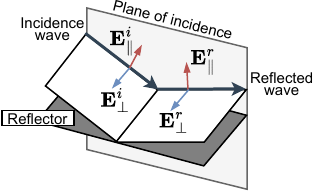}
        \caption{Electric field directions of a uniform plane wave.}
        \label{fig:e_field}
    \end{minipage}    
    \hfill
    \begin{minipage}{0.24\textwidth}
        \centering
        \includegraphics[width=\linewidth]{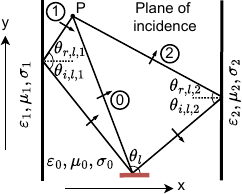}
        \caption{Electric field direction for horizontal polarization.}
        \label{fig:e_field_direc}
    \end{minipage}
    \vspace{-15pt}
\end{figure*}

\textbf{Notation:} 
$(\cdot)^\trp$, $(\cdot)^\herm$, and $(\cdot)^*$ represent the transpose, Hermitian transpose, and conjugate, respectively. 
Bold letters denote matrices (uppercase) and vectors (lowercase), whereas italic letters are used for scalars.
The Euclidean norm is $\norm{\bx}$. The notation $[\bX]_{i,j}$ indicates the element in the $i$-th row and $j$-th column of the matrix $\bX$. The $i$-th element of the vector $\bx$ is denoted by $x_i$. All vectors are assumed to be column vectors.
$\eye_M$ denotes the $M \times M$ identity matrix. Lastly, $\mathbb{C}$ and $\mathbb{R}$ represent the fields of complex and real numbers, respectively.

\section{Signal Propagation Framework}
We consider $L$ \glspl{ap}, and $\ap_l$ has $K_l$ dual-polarized antennas, where $l \in \{1,\dotsc,L\}$. 
Each dual-polarized antenna
has two elements with identical radiation patterns \cite{girnyk2021efficient} and orthogonal polarizations. The system model is given in Fig.~\ref{fig:System_Model}.\footnote{Note that in Fig. \ref{fig:System_Model}, the antennas are sketched as lines to show the polarization, and they are not dipole antennas.}

\subsection{Channel Model}
We model the wireless channel for both parallel ($\parallel$) and perpendicular ($\perp$) polarizations. The subscripts $\parallel$ and $\perp$ denote the components of the electric field aligned parallel and perpendicular to the plane of incidence, respectively. The direction of the electric field is given in Fig. \ref{fig:e_field}, where $\bE^i_\chi$ and $\bE^r_\chi$ represent the incoming and the reflected electric fields for polarization $\chi \in \{\parallel, \perp\}$ \footnote{Note that the transmitted field is not given in the figure, as it is not relevant.}.
For a given polarization $\chi$, the channel between the $k_l$-th antenna of $\ap_l$ and a receiver is modeled as follows
\vspace{-5pt}
\begin{equation} \label{eq:channel}
    g_{\chi, l, k_l} = \frac{\lambda}{4 \pi d_{l, k_l}} e^{-j \frac{2\pi}{\lambda} d_{l, k_l}} + \sum_{\medium=1}^{M} \frac{\gamma_{\chi, l,\medium} \lambda}{4 \pi d_{l, k_l,\medium}} e^{-j \frac{2\pi}{\lambda} d_{l, k_l,\medium}},
\end{equation}
where $k_l \in \{1, \dotsc, K_l\}$, $\lambda$ is the wavelength of the transmitted signal, and $d_{l,k_l}$ represents the direct distance from the $k_l$-th antenna of $\ap_l$ to the receiver.  
The channel model in Eq.~\eqref{eq:channel} comprises two distinct components: the first term corresponds to the \gls{los} component, while the summation term accounts for the reflections from the surrounding environment.

The wave is radiated in
free space with $M$ infinite planar reflectors contributing to
the \gls{nlos} paths and indexed as $\medium \in \{1,
\dotsc, M\}$. 
The parameter $d_{l, k_l,\medium}$ is the
\gls{nlos} path-length between the $k_l$-th antenna of $\ap_l$ and the
receiver due to the first-order reflection from the reflector $m$.  Additionally, the term $\gamma_{\chi,l,\medium} \in \mathbb{C}$ represents the reflection
coefficient associated with the reflector $\medium$ for an incoming signal
from $\ap_l$ in the $\chi$ polarization. This coefficient represents
the amplitude and phase changes induced by the reflection.

\subsection{Modeling of Reflection Coefficient}
We assume that the reflectors are in the far field of each \gls{ap},
and the incoming waves to the reflectors are planar.
Thus, we model a single reflection coefficient per \gls{ap} rather than per antenna.
The reflection coefficient $\gamma_{\chi, l,\medium}$ is given by~\cite{balanis2012}
\footnote{For parallel polarization, we use the reverse of the electric field direction provided in \cite[Fig. 5.4]{balanis2012}. Consequently, $\gamma_{\parallel,l,\medium}$ in \cite{balanis2012} is multiplied by 
$-1$.}
\vspace{-4pt}
\begin{subequations} \label{eq:reflection_coeff}
\begin{align}
    \gamma_{\perp,l,\medium} &= \frac{\eta_{\medium} \cos\theta_{i,l,\medium} - \eta_0 \cos\theta_{t,l,\medium}}{\eta_{\medium} \cos\theta_{i,l,\medium} + \eta_0 \cos\theta_{t,l,\medium}}, \\
    \gamma_{\parallel,l,\medium} &= \frac{\eta_0 \cos\theta_{i,l,\medium} -\eta_{\medium} \cos\theta_{t,l,\medium}}{\eta_0 \cos\theta_{i,l,\medium} + \eta_{\medium} \cos\theta_{t,l,\medium}},
\end{align}
\end{subequations}
\vspace{-5pt}
where
\begin{itemize}
    \item $\theta_{i,l,\medium} \in [0,\pi/2]$ and $\theta_{t,l,\medium} \in \mathbb{C}$ are the angles of incidence and refraction, respectively, between the $\medium$-th reflector and incoming wave from $\ap_l$,
    \item $\eta_0 \in \mathbb{C}$ is the intrinsic impedance of free space,
    \item $\eta_{\medium} \in \mathbb{C}$ is the intrinsic impedance of the $\medium$-th reflector.
\end{itemize}
Note that $\theta_{t,l,\medium}$ is complex and not the true angle of refraction if the $m$-th reflector is a lossy conductor.
The intrinsic impedance is defined as  \cite{balanis2012}
\vspace{-6pt}
\begin{equation}
    \eta_{n} = \sqrt{j\omega \mu_n / (\sigma_n + j\omega \varepsilon_n)},
    \vspace{-2pt}
\end{equation}
where $n \in \{0,\dotsc,M\}$, $\omega = 2 \pi f$ represents the angular frequency, and $\varepsilon_n$, $\mu_n$, and $\sigma_n$ are the permittivity, permeability, and conductance, respectively, for free space (when $n=0$) and $n$-th reflector (when $n>0$).
The relation between $\theta_{i,l,\medium}$ and $\theta_{t,l,\medium}$ is given by Snell's law of refraction as  \cite{balanis2012}
\vspace{-2pt}
\begin{equation}
    \cos(\theta_{t,l,\medium}) = \sqrt{1 - \left( j \beta_0/(\alpha_\medium + j \beta_\medium) \right)^2 \sin^2(\theta_{i,l,\medium})}.
    \vspace{-2pt}
\end{equation}
The attenuation and phase constants are $\alpha_m \! \approx \! \frac{\sigma_m}{2} \sqrt{\mu_m / \varepsilon_m}$ and $\beta_m \!\approx\! \omega \sqrt{\mu_m \varepsilon_m}$, respectively, for good dielectrics, and $\alpha_m \!\approx\! \sqrt{\omega \mu_m \sigma_m/2}$ and $\beta_m \!\approx\! \sqrt{\omega \mu_m \sigma_m/2}$ for good conductors. The phase constant in free space is $\beta_0 \!=\! \omega \sqrt{\mu_0 \varepsilon_0}$ \cite{balanis2012}.

If the wave reflects from a good conductor, $\theta_{t,l,\medium} \approx 0$ 
and $\eta_0/\eta_m \gg 1$. As a result $\gamma_{\perp,l,\medium} \approx -1$ and $\gamma_{\parallel,l,\medium} \approx 1$. 

\subsection{Modeling of Electric Field Direction}\label{sec:e-field_direction}
\vspace{-3pt}
We model the electric field direction for the given system model with two reflectors in Fig. \ref{fig:System_Model}. Without loss of generality, we assume that the \glspl{ap} are equipped with vertically and horizontally polarized antennas. \footnote{The polarization of an antenna is defined with respect to the Earth's surface.} To model the electric field direction, we make the following assumptions: (1) the \glspl{ap} and the receiver are located at the same height, and (2) reflectors are smooth, flat surfaces positioned in the $y$-$z$ plane. Thus, the vertically and horizontally polarized waves correspond to perpendicular and parallel polarized waves, respectively, and the polarization of the reflected wave remains unchanged. In addition, (3) the reflectors are in the far field of the \glspl{ap}.

\textbf{Vertical Polarization:} When vertically polarized antennas of $\ap_l$ transmit, the received waves are assumed to have their electric field aligned with the $z$-axis, i.e., $\vec{u}_{l,n}^\perp = \vec{z}$. 
For polarization $\chi$, $\vec{u}_{l,n}^\chi \in \realset{3}{1}$ is a unit vector in Cartesian coordinates, and shows the direction of the received electric field coming from the \gls{los} path when $n=0$ and from the $n$-th reflector for $n\neq0$.

\textbf{Horizontal Polarization:} When horizontally polarized antennas of $\ap_l$ transmit, the electric field directions are shown in the $x$-$y$ plane
in Fig. \ref{fig:e_field_direc}. The arrows in the figure show the electric field directions, and at point P:
\textcircled{0} $\vec{u}_{l,0}^{\parallel} \!=\! -\vec{y} \cos(\theta_l \!\pmod{\pi}) \!+\! \vec{x} \sin(\theta_l \!\pmod{\pi})$,
\textcircled{1} $\vec{u}_{l,1}^{\parallel} \!=\! -\vec{y} \cos(\theta_{r,l,1}) \!+\! \vec{x} \sin(\theta_{r,l,1})$, and
\textcircled{2} $\vec{u}_{l,2}^{\parallel} \!=\! \vec{y} \cos(\theta_{r,l,2}) \!+\! \vec{x} \sin(\theta_{r,l,2})$. Here, $\theta_{r,l,1}$ and $\theta_{r,l,2}$ are the reflection angles, with $\theta_{r,l,1} = \theta_{i,l,1}$ and $\theta_{r,l,2} = \theta_{i,l,2}$. 
Note that the direction vectors are approximations, as they are calculated per \gls{ap} rather than per antenna. However, this approximation is generally accurate, as the angular variation across antennas within an \gls{ap} is negligible, especially in the far-field region.

Finally, the received signal and \gls{pg} can be calculated using the channel in Eq. \eqref{eq:channel} and the electric field directions in this subsection. The examples for different cases are given in Section \ref{sec:numerical_results}.

\section{Revisiting the Proposed \gls{dbf} for C-MIMO}
\vspace{-2pt}
For \gls{cmimo}, the design of \gls{bf} coefficients to generate a broad beam is presented in \cite{petersson2020power, girnyk2020simple, petersson2022energy, girnyk2021efficient}.
The \gls{bf} coefficient for $\ap_l$ is defined as $\bw_{\text{V},l} \in \complexset{K_l}{1}$ and $\bw_{\text{H},l} \in \complexset{K_l}{1}$, where V and H stand for the vertically and horizontally polarized antennas, respectively. Note that these \gls{bf} coefficients can be applied to any two orthogonal polarizations, such as cross-polarized antennas. 

The radiated electric field from $\ap_l$ is given by
\vspace{-2pt}
\begin{equation}
\be_l(\varphi, \theta) =
    \begin{bmatrix}
        \ba^\trp(\varphi, \theta) \bw_{\text{V},l}  \\
        \ba^\trp(\varphi, \theta) \bw_{\text{H},l} 
    \end{bmatrix} \sqrt{G(\varphi, \theta)},
    \vspace{-2pt}
\end{equation}
where $G(\varphi, \theta)$ is the radiation pattern of a single antenna for both polarizations. For an antenna array deployed along the $y$ axis, the steering vector $\ba(\varphi, \theta)$ is defined by
\begin{equation}
    \mathbf{a}(\varphi, \theta) = 
    \begin{bmatrix}
        1 \\
        \exp(-j \frac{2\pi}{\lambda} d \sin(\varphi) \cos(\theta)) \\
        \vdots \\
        \exp(-j \frac{2\pi}{\lambda} (K_l-1) d \sin(\varphi)\cos(\theta))
    \end{bmatrix},
\end{equation}
where $\varphi \in [-\pi, \pi)$ is azimuth angle and $\theta \in [-\pi/2, \pi/2]$ is elevation angle in spherical coordinates as shown in Fig. \ref{fig:System_Model}, and $d$ is the inter-antennas distance.
As shown in \cite{petersson2020power, girnyk2020simple, petersson2022energy, girnyk2021efficient}, the total radiated power 
pattern of $\ap_l$ is
\vspace{-2pt}
\begin{equation} \label{eq:radiation_pattern}
\begin{split}
    \norm{\be_l(\varphi, \theta)}^2  \!=\! \sum_\chi \!
    \norm{\ba^\trp(\varphi, \theta) \bw_{\chi,l}}^2 \! G(\varphi, \theta) \!=\! 2 K_l G(\varphi, \theta).
    \vspace{-5pt}
\end{split}
\end{equation}

Thus, the radiation pattern of the dual-polarized array matches that of a single antenna element, while the array’s radiated power is $2K_l$ times stronger due to all antennas transmitting at maximum power.

To illustrate this, consider a simple case with two dual-polarized antennas and $G(\varphi, \theta) = 1$. Since there is a single \gls{ap} in this case, the subscript 
$l$ is omitted for notation simplicity. In \cite{girnyk2020simple}, the proposed \gls{bf} coefficients for two dual-polarized antennas are 
\vspace{-3pt}
\begin{equation}
    \bw_\text{V}=[w_{V,1}, w_{V,2}]^\transp,\
    \bw_\text{H}=[w_{H,1}, w_{H,2}]^\transp,
    \vspace{-3pt}
\end{equation}
where $ w_{V,2} = -w_{H,1}^*, w_{H,2} = w_{V,1}^*$ and  each \gls{bf} coefficient has unit magnitude.
The radiated electric field for the two dual-polarized antennas is given by
\vspace{-2pt}
\begin{equation}
		\vectr{e}(\varphi, \theta) = 
		\underbrace{\begin{bmatrix}
			w_{V,1}a_1(\varphi, \theta)\\
			w_{H,1}a_1(\varphi, \theta)
		\end{bmatrix}}_{\vectr{e}_1(\varphi, \theta)}
		+
		\underbrace{\begin{bmatrix}
			w_{V,2}a_2(\varphi, \theta)\\
			w_{H,2}a_2(\varphi, \theta)
		\end{bmatrix}}_{\vectr{e}_2(\varphi, \theta)},
        \vspace{-5pt}
\end{equation}
where $a_1(\varphi, \theta)$ and $a_2(\varphi, \theta)$ are the elements of the steering vector corresponding to the first and second dual-polarized antennas.
The total radiated power pattern is computed as
\begin{equation} \label{eq:total_rad_patrn}
	\begin{split}
		&\norm{\vectr{e}(\varphi, \theta)}^2  \\
		 &\ \ =\! \norm{\vectr{e}_1(\varphi, \theta)}^2 \!+\! \norm{\vectr{e}_2(\varphi, \theta)}^2 \!+\! 2\!\text{ Re}(\vectr{e}_1^\herm(\varphi, \theta) \vectr{e}_2(\varphi, \theta)),
	\end{split}
\end{equation}
where $\norm{\vectr{e}_1(\varphi, \theta)}^2=\norm{\vectr{e}_2(\varphi, \theta)}^2=2$.
The \gls{bf} coefficients in \cite{girnyk2020simple} ensures that
$
\vectr{e}_1^\herm(\varphi, \theta) \vectr{e}_2(\varphi, \theta) = 0
$.
Thus, the radiation pattern of the two dual-polarized antennas remains identical to that of a single antenna element, while the radiated power is amplified by a factor of four, as $\norm{\be(\varphi, \theta)}^2 = 4$.

\section{Proposed BF Method for the Distributed Setup}

\begin{table}[tbp]
    \centering
    \normalsize
    \setlength{\tabcolsep}{5pt} 
    \renewcommand{\arraystretch}{1.2} 
    \small 
    \caption{The \gls{bf} coefficients for the \gls{dmimo} setup. \vspace{-5pt}}
    \label{tab:ap_slot_improved}
    \setlength{\tabcolsep}{5pt} 
    \begin{tabular}%
    {p{0.5cm}|p{2.8cm}|p{0.4cm}|p{2.8cm}}
        Slot & $AP_1$ & $\dots$ & $AP_L$ \\
        \hline
        $1$ &         $\bW_1^{(1)}$$=$$[\bw_{\text{V},1}^{(1)}\  \bw_{\text{H},1}^{(1)}]$ & 
        $\dots$ & 
        $\bW_L^{(1)} $$=$$ [\bw_{\text{V},L}^{(1)} \ \bw_{\text{H},L}^{(1)}]$ \\
        \hline
        $\vdots$ & $\vdots$ & $\vdots$ & $\vdots$ \\
        \hline
        $T$ & 
        $\bW_1^{(T)} $$=$$ [\bw_{\text{V},1}^{(T)} \ \bw_{\text{H},1}^{(T)}]$ & 
        $\dots$ & 
        $\bW_L^{(T)} $$=$$ [\bw_{\text{V},L}^{(T)} \ \bw_{\text{H},L}^{(T)} ]$ \\
        \hline
    \end{tabular}
    \vspace{-10pt}
\end{table}

In Table \ref{tab:ap_slot_improved}, the proposed \gls{dbf} coefficients for \gls{dmimo}, $\bW_l^{(t)} \!=\! [\bw_{\text{V},l}^{(t)} \ \bw_{\text{H},l}^{(t)}]$, are given. The vectors
$\bw_{\text{V},l}^{(t)}$ and $\bw_{\text{H},l}^{(t)}$ stand for the \gls{bf} coefficients for vertically and horizontally polarized antennas, respectively, in $\ap_l$ during time slot $t$, where $t \in \{1,\dotsc,T\}$. 
First, the \gls{bf} coefficients for each \gls{ap}, i.e., $\bW_l \!=\! [\bw_{\text{V},l} \ \bw_{\text{H},l}]$,  are designed based on the methods in \cite{petersson2020power, girnyk2020simple, petersson2022energy, girnyk2021efficient} to satisfy Eq. \eqref{eq:radiation_pattern}.
Then, the design of the coefficients for each time slot is calculated by 
\begin{equation}
    \bW_l^{(t)} = [\pilot]_{t,l} \bW_l,
\end{equation}
where $\pilot \in \complexset{T}{L}$ is a precoding matrix.

The total radiated energy from $L$ \glspl{ap} over $T$ time slots in the far-field region of the distributed \glspl{ap} is expressed as 
\begin{equation} 
    \sum_{t=1}^T \norm{ \sum_{l=1}^L \vectr{e}_l^{(t)}(\varphi, \theta)}^2 = \sum_\chi \sum_{t=1}^T \norm{\sum_{l=1}^L \ba^\trp_l(\varphi, \theta) \bw^{(t)}_{\chi,l}}^2,
\end{equation}
where $\vectr{e}_l^{(t)}(\varphi, \theta)$ is the radiated electric field from $\ap_l$ at time slot $t$, and $\ba(\varphi, \theta) = [\ba_1(\varphi, \theta), \dotsc, \ba_l(\varphi, \theta), \dotsc, \ba_L(\varphi, \theta)]^\trp$ is the steering vector size of $\sum_l K_l \times 1$. 
The vector $\ba_l(\varphi, \theta)$ is a part of the steering vector corresponding to $\ap_l$.
Without loss of generality, $G(\varphi, \theta) = 1$.

\begin{prop}
If $\pilot^\herm \pilot = \eye$, where $T \geq L$, the total radiated energy over $T$ time slots simplifies to
\vspace{-3pt}
\begin{equation}  \label{eq:radiated_energy}
    \sum_{t=1}^T \norm{ \sum_{l=1}^L \vectr{e}_l^{(t)}(\varphi, \theta)}^2 \stackrel{(a)}{=} \sum_\chi \sum_{l=1}^L \norm{\ba^\trp_l(\varphi, \theta) \bw_{\chi,l}}^2 = 2\sum_l K_l.
    \vspace{-3pt}
\end{equation}
This expression indicates that the radiated energy pattern mirrors the single-antenna case, scaled by a factor of $2\sum_l K_l$.
Another key insight is that the transmitted signals from different \glspl{ap} are orthogonal, as shown in $(a)$ in Eq. \eqref{eq:radiated_energy}, due to the proposed \gls{bf} design. This orthogonality holds for any arbitrary vector, not just $\ba_l(\varphi, \theta)$.

The matrix $\pilot$ can be chosen as a Hadamard matrix or a discrete Fourier transform (DFT) matrix, appropriately scaled to satisfy $\pilot^\herm \pilot = \eye$.
\end{prop}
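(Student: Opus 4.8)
The plan is to substitute the per-slot beamformer design into the total-radiated-energy identity stated immediately before the proposition, and then recognize the summation over time slots as the squared norm of $\pilot$ acting on a vector of per-AP contributions, which collapses under the isometry hypothesis. First I would insert $\bw_{\chi,l}^{(t)} = [\pilot]_{t,l}\,\bw_{\chi,l}$ into the right-hand side, obtaining $\sum_\chi\sum_{t=1}^T \abs{\sum_{l=1}^L [\pilot]_{t,l}\,\ba_l^\trp(\varphi,\theta)\,\bw_{\chi,l}}^2$; note that the inner object is a scalar, so the vector norm becomes an ordinary modulus.

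The crux is to fix a polarization $\chi$ and a direction $(\varphi,\theta)$ and define the direction-dependent scalars $s_{\chi,l} = \ba_l^\trp(\varphi,\theta)\,\bw_{\chi,l}$, which do not depend on the slot index $t$. Collecting them into $\vectr{s}_\chi = [s_{\chi,1},\dots,s_{\chi,L}]^\trp \in \complexC{L}$, the bracketed sum over $l$ is precisely the $t$-th entry of $\pilot\vectr{s}_\chi$, so the sum over $t$ equals $\norm{\pilot\vectr{s}_\chi}^2 = \vectr{s}_\chi^\herm\pilot^\herm\pilot\,\vectr{s}_\chi$. Invoking the hypothesis $\pilot^\herm\pilot=\eye$ collapses this to $\norm{\vectr{s}_\chi}^2 = \sum_{l=1}^L \norm{\ba_l^\trp(\varphi,\theta)\,\bw_{\chi,l}}^2$, which is exactly equality $(a)$. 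The constraint $T\geq L$ is what makes $\pilot^\herm\pilot=\eye$ feasible in the first place, since $\pilot$ must carry $L$ orthonormal columns of length $T$.

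To reach the closing value I would apply the single-antenna radiation-pattern identity in Eq.~\eqref{eq:radiation_pattern} to each AP separately: the coefficients $\bW_l$ are designed so that $\sum_\chi \norm{\ba_l^\trp(\varphi,\theta)\,\bw_{\chi,l}}^2 = 2K_l$ with $G(\varphi,\theta)=1$, and summing over $l \in \{1,\dots,L\}$ yields $2\sum_l K_l$, completing the chain. I expect no genuine obstacle beyond recognizing the temporal summation as an isometry acting on $\vectr{s}_\chi$; the point I would emphasize is that the inter-AP cross terms vanish purely because of the column-orthogonality of $\pilot$, so that $(a)$ holds simultaneously for every direction vector rather than through any angular coincidence. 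Finally, I would confirm that scaling a Hadamard or DFT matrix by $1/\sqrt{T}$ produces an admissible $\pilot$ with orthonormal columns, substantiating the concluding remark of the proposition.
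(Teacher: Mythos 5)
Your proposal is correct and follows essentially the same route as the paper's proof: the paper likewise defines the per-AP scalars $c_l = \ba_l^\trp(\varphi,\theta)\bw_{\chi,l}$, rewrites the temporal sum as $\norm{\pilot\bc}^2 = \bc^\herm\pilot^\herm\pilot\,\bc$, collapses it via $\pilot^\herm\pilot=\eye$, and then invokes Eq.~\eqref{eq:radiation_pattern} to obtain $2\sum_l K_l$. Your added remarks on the feasibility role of $T\geq L$ and the $1/\sqrt{T}$ scaling of Hadamard/DFT matrices are consistent with, and slightly more explicit than, the paper's treatment.
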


\begin{proof}
For polarization $\chi$, the total radiated energy is 
\vspace{-5pt}
\begin{subequations} \label{eq:proof}
\begin{align}
    &\sum_{t=1}^T \norm{\sum_{l=1}^L [\pilot]_{t,l} 
 \underbrace{\ba^\trp_l(\varphi, \theta) \bw_{\chi,l}}_{c_l}}^2 \\
    &\quad = \norm{\pilot \bc}^2 = \bc^\herm \underbrace{\pilot^\herm \pilot}_{\eye} \bc = \sum_{l=1}^L \norm{\ba^\trp_l(\varphi, \theta) \bw_{\chi,l}}^2,
    \vspace{-5pt}
\end{align}
\end{subequations}
where $c_l$ is the $l$-th element of $\bc$.
Eq. \eqref{eq:proof} also holds when any arbitrary channel vector is used in place of 
$\ba^\trp_l(\varphi, \theta)$. Therefore, the transmitted signals from different \glspl{ap} are orthogonal to each other for any channel conditions.

In addition, since $\bW_l \!=\! [\bw_{\text{V},l} \ \bw_{\text{H},l}]$ satisfies Eq. \eqref{eq:radiation_pattern}, the total radiated energy across both polarizations can be expressed using Eq. \eqref{eq:proof} as
\vspace{-5pt}
\begin{equation}
    \sum_{l=1}^L \sum_\chi \norm{\ba^\trp_l(\varphi, \theta) \bw_{\chi,l}}^2 = 2\sum_l K_l.
    \vspace{-8pt}
\end{equation}
\vspace{-8pt}
\end{proof}

\begin{figure}[tbp]
	\centering
	\includegraphics[width = 0.6\linewidth]{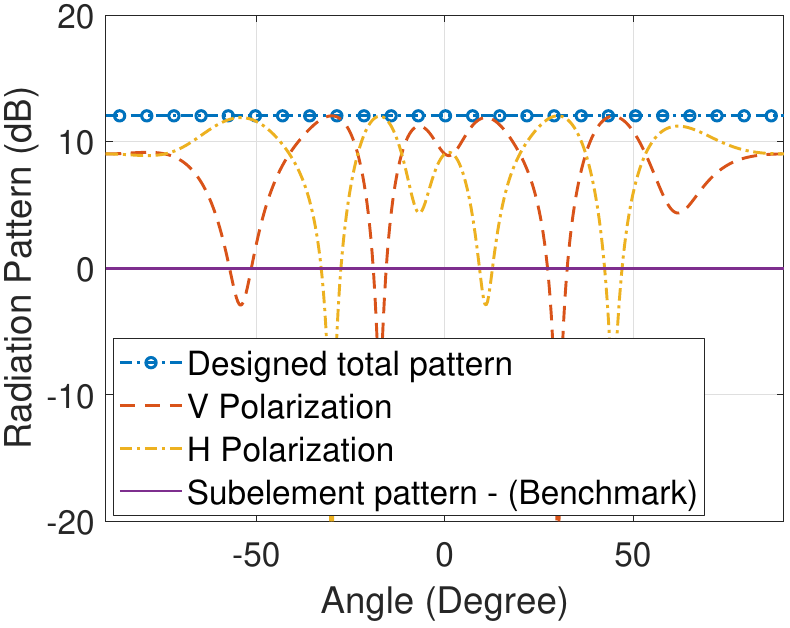}
    \vspace{-5pt}
	\caption{Radiation pattern for the dual-polarized array \cite{petersson2022energy}.}
	\label{fig:radiation_pattern}
    \vspace{-5pt}
\end{figure}

\section{Numerical Results} \label{sec:numerical_results}
For \gls{nlos} scenarios, we consider two reflectors as illustrated in Fig. \ref{fig:System_Model}. For \gls{los}, no reflectors are present, and the radiated wave travels in free space. The assumptions made in Section \ref{sec:e-field_direction} are also valid in this section.
The carrier frequency is $2.6$ GHz, and the inter-antenna distances are approximately $0.06$ \gls{m}. The coherence bandwidth of the system can be approximated as $B_\text{c} = c/d$ Hz, where $c$ is the speed of light and $d$ is the maximum difference between two propagation paths. For the given simulation setup, $d=10$ \gls{m} and $B_\text{c} = 30$ MHz.

Each antenna element in an \gls{ap} is assumed to have $G(\varphi, \theta)$$=$$1$. 
The service area of the \glspl{ap} has dimensions of
$10 \times 20$ \gls{m} in the $x$-$y$ plane.
We consider four cases:
\begin{itemize}
    \item Case 1 - \gls{cmimo} with \gls{dbf}: A single \gls{ap} with eight dual-polarized antennas is aligned along the $x$ axis and centered at $(x,y) = (5,10)$ m. The \gls{bf} coefficients $\bw_{\text{V},1}$ and $\bw_{\text{H},1}$ are designed as in \cite{girnyk2020simple,girnyk2021efficient}. This case serves as a benchmark for Case 2, i.e., \gls{cmimo} with \gls{ostbc}.
     \item Case 2 - \gls{cmimo} with \gls{ostbc}: The setup is the same as Case 1, but an \gls{ostbc} matrix $\bW \in \complexset{K_1}{K_1}$ is transmitted in both polarizations to broadcast information over $K_1$ time slots \cite{karlsson2018techniques}. The code matrix $\bW$ satisfies $\bW \bW^\herm  = \eye_{K_1}$.
    \item Case 3 - D-MIMO: Four \glspl{ap} $(L=4)$, each with two dual-polarized antennas $(K_l = 2)$ aligned along the $x$ axis, are placed at $(x,y) = (5,4)$, $(5,8)$, $(5,12)$, and $(5,16)$. 
    The \gls{bf} coefficients for the first \gls{ap} are designed using the method from Case 1. The remaining \glspl{ap} also use the same \gls{bf} coefficients with the first \gls{ap} as $\bw_{\text{V},l} = \bw_{\text{V},1}$ and $\bw_{\text{H},l} = \bw_{\text{H},1}$ for $l \in \{2,3,4\}$. This approach provides the simplest baseline in the distributed setup and serves as a benchmark for Case 4. 
    \item Case 4 - Proposed \gls{dmimo}: The setup is identical to Case 3, but the \gls{bf} coefficients are designed using the proposed method for the distributed setup, as shown in Table \ref{tab:ap_slot_improved}.
\end{itemize} 

In all cases, the total transmit energy remains the same. 
For each case, we calculate the \gls{pg} for \gls{nb} and \gls{wb} channels. The \gls{pg} is calculated in the $x$-$y$ plane at the height of the \glspl{ap}. The \gls{pg} in the case of \gls{nb} channel is calculated as 
\begin{equation} \label{eq:pg_calculation}
\begin{split}
    \text{PG} = 
    \begin{cases}
       \sum_\chi \norm{\sum_{n=0}^M {\vec{u}_{1,n}^\chi \bg_{\chi, 1}^{(n)}}^\trp \bw_{\chi,1} }^2 & \text{Case 1}, 
        \\
      \sum_\chi \norm{\sum_{n=0}^M {\vec{u}_{1,n}^\chi \bg_{\chi, 1}^{(n)}}^\trp \bW}^2 & \text{Case 2},
       \\
       \sum_\chi \norm{\sum_{l=1}^L \sum_{n=0}^M {\vec{u}_{l,n}^\chi \bg_{\chi, l}^{(n)}}^\trp \bw_{\chi,l}}^2 & \text{Case 3},
       \\
       \sum_\chi \sum_{t=1}^T \norm{\sum_{l=1}^L \sum_{n=0}^M {\vec{u}_{l,n}^\chi \bg_{\chi, l}^{(n)}}^\trp \bw_{\chi,l}^{(t)}}^2 & \text{Case 4},
    \end{cases}
    \end{split}
\end{equation}
where $M=2$ for \gls{nlos} channel, and  ${\bg_{\chi, l}^{(0)}}, {\bg_{\chi, l}^{(1)}}$, and 
${\bg_{\chi, l}^{(2)}}$ represent the \gls{los} channel and channels due to the reflection from first and second reflectors, respectively, and can be calculated using Eq. \eqref{eq:channel}. The channel between $\ap_l$ and the point where \gls{pg} is calculated is
${\bg_{\chi, l}} = {\bg_{\chi, l}^{(0)}} + {\bg_{\chi, l}^{(1)}} + {\bg_{\chi, l}^{(2)}}$, and it can be written as
$\bg_{\chi, l} = [g_{\chi, l, 1},\dotsc, g_{\chi, l, K_l}]^\trp$, where $g_{\chi, l, k_l}$ is given in Eq. \eqref{eq:channel}.
Note that the \gls{pg} is not normalized by the transmit power, so it also represents the received energy.

\begin{figure*}[tbp] 
    \centering 
    \subfloat[Case 3: \gls{dmimo}.]{
        \includegraphics[width=0.26\textwidth]{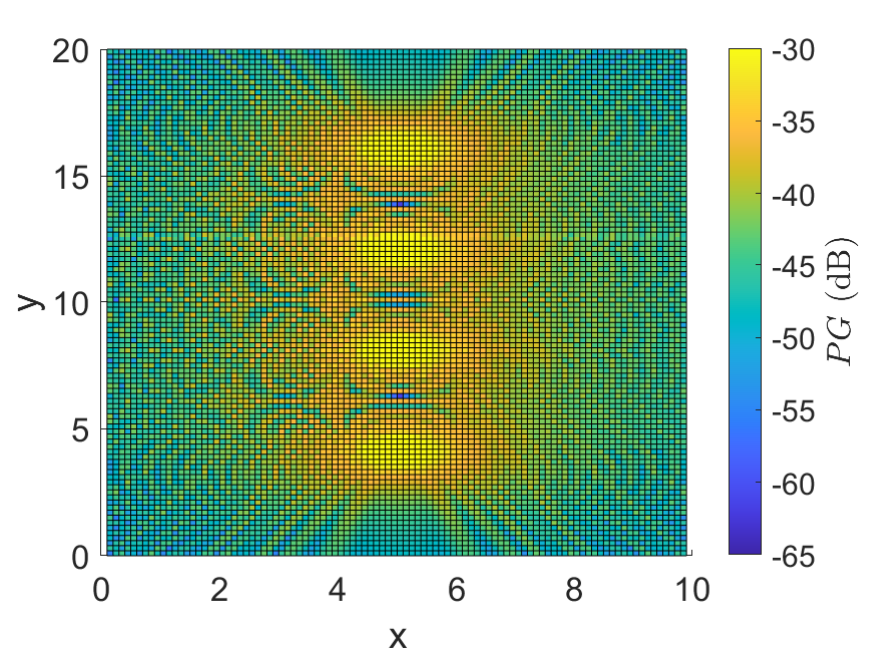}
        \label{fig:PG_distributed}
    }
    \hfil
    \subfloat[Case 4: Proposed \gls{dmimo}.]{
        \includegraphics[width=0.26\textwidth]{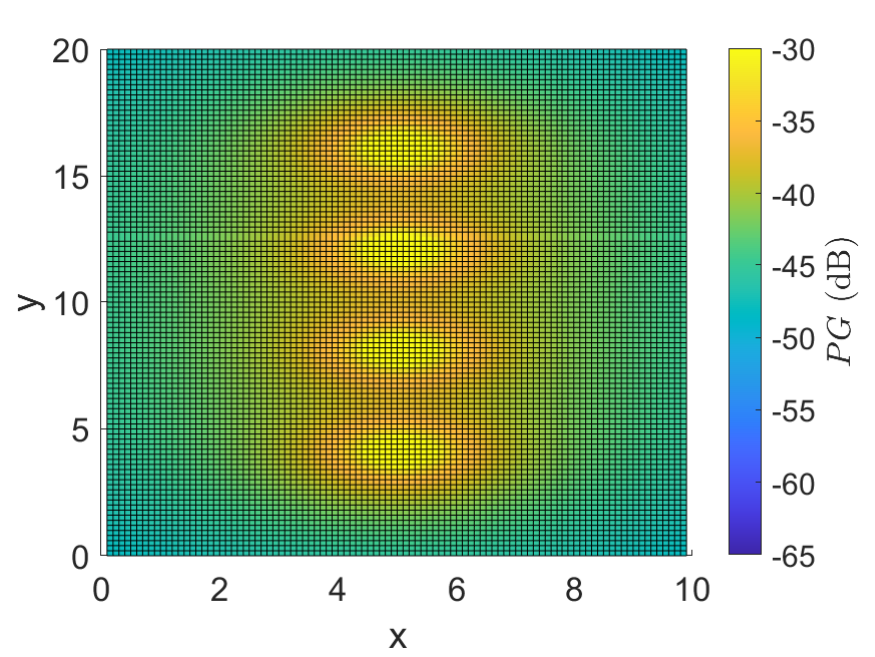}
        \label{fig:PG_distributed_proposed}
    }
    \hfil
    \subfloat[Empirical CDF in LoS channel.]{
    \includegraphics[width=0.24\textwidth]{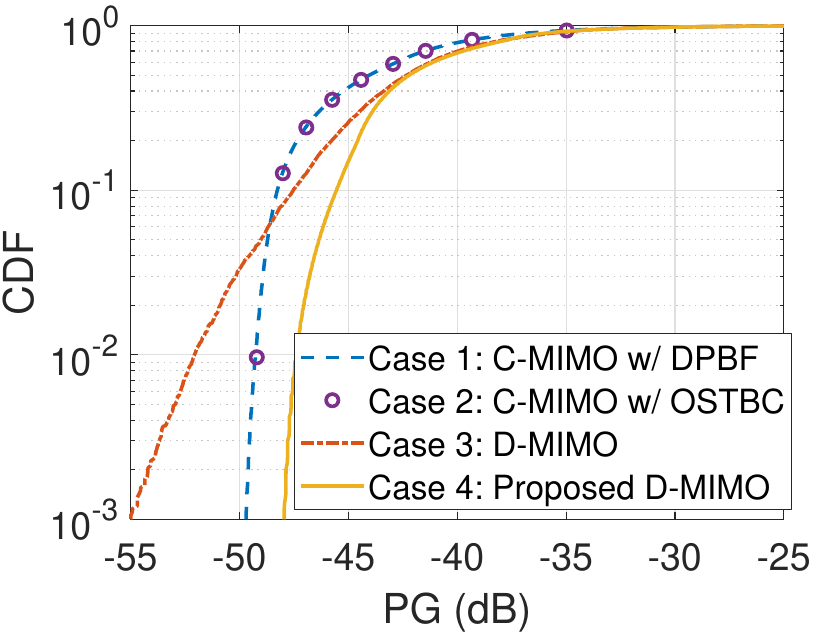}
        \label{fig:cdf_los}}
    \caption{Comparison of PG for collocated and distributed MIMO in LoS.}
    \label{fig:combined_figure}
    \vspace{-10pt}
\end{figure*}

\begin{figure*}[tbp]
    \centering
    \begin{minipage}{0.28\textwidth}
        \centering
        \includegraphics[width=\linewidth]{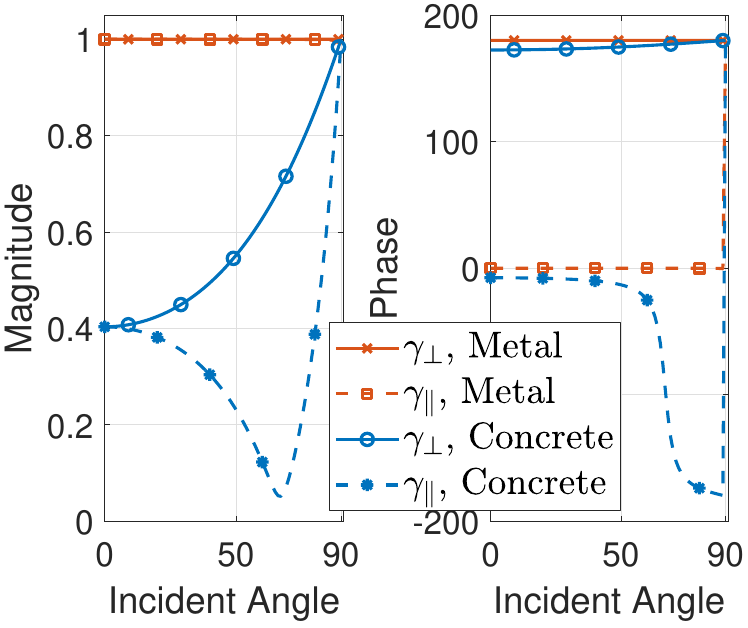}
        \caption{Reflection coefficients for different materials.}
        \label{fig:ref_coef}
    \end{minipage}
    \hfill
    \begin{minipage}{0.28\textwidth}
        \centering
        \includegraphics[width=\linewidth]{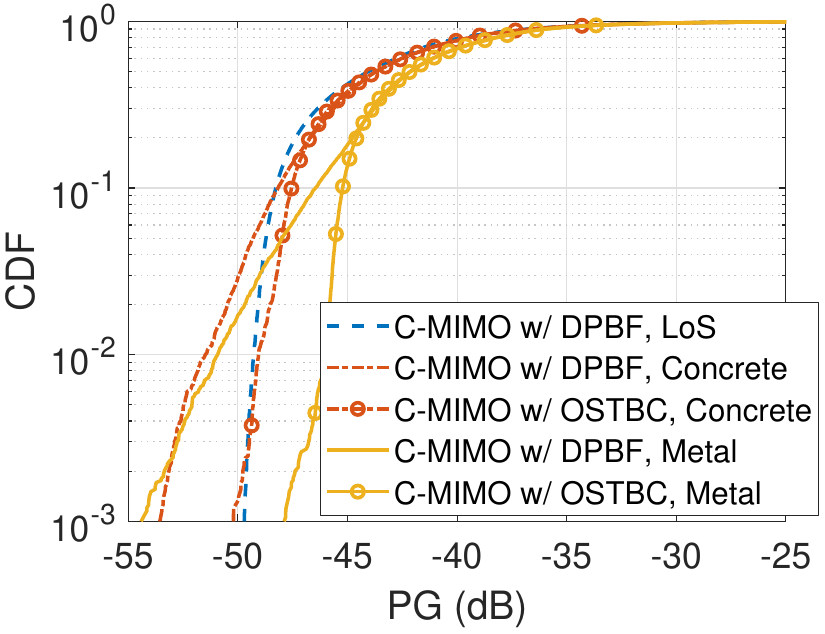}
        \caption{Empirical CDF in NLoS NB channel for C-MIMO.}
        \label{fig:cdf_nlos}
    \end{minipage}
    \hfill
    \begin{minipage}{0.28\textwidth}
        \centering
        \includegraphics[width=\linewidth]{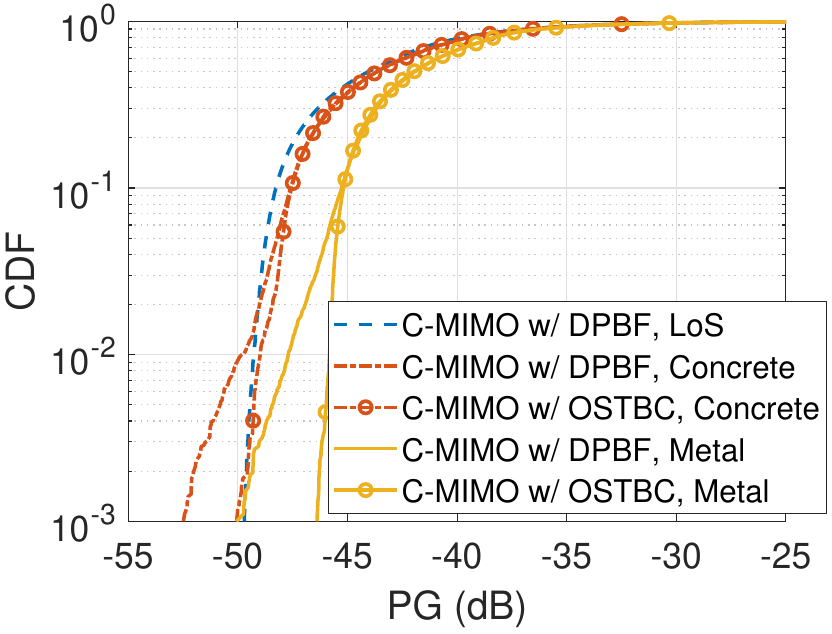}
        \caption{Empirical CDF in NLoS \gls{wb} channel for C-MIMO.}
        \label{fig:cdf_nlos_wb}
    \end{minipage}
    \vspace{-10pt}
\end{figure*}

\begin{table}[tbp]
    \centering
    \small
    \setlength{\tabcolsep}{6pt} 
    \renewcommand{\arraystretch}{1.2} 
    \caption{Material properties at $2.6$ GHz \cite{ITU-R_P.2040-3}\vspace{-5pt}}
    \label{tab:material_properties}
    \begin{tabular}{l c c c}
        \hline
        Property & Concrete Wall & Metal Reflector & Free Space \\
        \hline
        $\varepsilon$ [F/m] & $(5.2 - j0.6)\varepsilon_0$ & $(1 - j6 \times 10^7)\varepsilon_0$ & $\varepsilon_0$ \\
        $\mu$ [H/m] & $\mu_0$ & $\mu_0$ & $\mu_0$ \\
        $\sigma$ [S/m] & $0.1$ & $10^7$ & $0$ \\
        $\alpha$ [Np/m] & $8 + j0.5$ & $3.2 \times 10^{5}$ & $0$ \\
        $\beta$ [Rad/m] & $125 - j 8$ & $3.2 \times 10^{5}$ & $54.5$ \\
        $\eta$ [ohms] & $160 + j20$ & $0.02 + j0.02$ & $376.7$ \\
        \hline
    \end{tabular}
\end{table}

We also investigate the \gls{pg} distribution in the case of \gls{wb} channel, i.e., multiple subcarriers.
We use $100$ subcarriers separated uniformly in the frequency band of $100$ MHz centered at $2.6$ GHz. 
The channel for each subcarrier is calculated using Eq. \eqref{eq:channel} and the corresponding wavelengths for the subcarriers. 
We compute the \gls{pg} for each subcarrier using Eq. \eqref{eq:pg_calculation} and then calculate the average \gls{pg} over all subcarriers.
Note that we only calculate the reflection coefficients at $2.6$ GHz because they stay similar for the given frequency range.

\subsection{Results in \gls{los} Channel}
\subsubsection{\gls{los} Narrowband Channel}
Fig. \ref{fig:radiation_pattern} shows the \gls{ap} radiation pattern in the \gls{cmimo} with \gls{dbf} case. This figure is included for completeness and is a regenerated and updated version of \cite[Fig. 4]{petersson2022energy}. 
The radiation pattern is calculated using Eq. \eqref{eq:radiation_pattern}, where $G(\varphi, \theta) = 1$, $\varphi \in [-\pi, \pi)$, and $\theta = 0$.
As shown, the radiation pattern of the \gls{ap} remains identical to that of a single antenna element. Moreover, due to the phase-only \gls{bf}, all antennas operate at full power, resulting in a total radiated power that is $16$ times higher than that of a single antenna element. 

Figs. \ref{fig:PG_distributed} and \ref{fig:PG_distributed_proposed} illustrate the \gls{pg} distribution across the service area for Cases 3 and 4, respectively. 
As shown in figures, the transmitted signals from the four different \glspl{ap} in Case 4 remain orthogonal to each other, whereas in Case 3, destructive interference occurs.

In Fig. \ref{fig:cdf_los}, we compare the empirical \glspl{cdf} of \glspl{pg} for all cases in \gls{nb} \gls{los} channel. As shown in the figure, Case 4 achieves the best performance, benefiting from the proposed \gls{bf} coefficients and the macro-diversity effect of \gls{dmimo}. In contrast, there are deep fadings in Case 3
due to the destructive interference. In addition, \gls{cmimo} with \gls{ostbc} and \gls{dbf} performs the same in \gls{los} channel because both cases have the isotropic radiation pattern.

\subsubsection{\gls{los} Wideband Channel}
The results for \gls{los} \gls{wb} channel are not shown in any figure, as they are very similar to those in Fig. \ref{fig:cdf_los}, except for Case 3, whose performance closely aligns with Case 4.
This difference occurs because using subcarriers separated via large bandwidth changes the fading locations in the service area, and taking the average over subcarriers improves the performance due to the frequency diversity.  

\subsection{Results in \gls{nlos} Channel}

\subsubsection{Reflection Coefficient Analysis}
We use the metal reflectors and the concrete walls to analyze the system performance in \gls{nlos} channel.
The reflection coefficients are calculated using Eq. \eqref{eq:reflection_coeff}. The parameters for concrete wall, metal reflector, and free space at $2.6$ GHz are given in Table \ref{tab:material_properties}, where $\varepsilon_0 = 8.8\times 10^{-12}$ F/m and $\mu_0 = 4 \pi \times 10^{-7}$ H/m are the permittivity and permeability of a vacuum \cite{ITU-R_P.2040-3}. 

Fig. \ref{fig:ref_coef} shows the amplitudes and phases of reflection coefficients for the metal reflector and concrete wall at $2.6$ GHz.
As seen in Table \ref{tab:material_properties},  $\eta_0/\eta_m \gg 1$ is satisfied in the case of metal reflector. As a result,
$\gamma_{\perp,l,\medium} \approx -1$ and $\gamma_{\parallel,l,\medium} \approx 1$ for any angle of incidences except $90$ degrees as seen in Fig. \ref{fig:ref_coef}. The contribution of the specular multipath components is weaker in the case of concrete wall because the magnitude of the reflection coefficient is smaller.

\subsubsection{\gls{nlos} Narrowband Channel for C-MIMO}
Fig. \ref{fig:cdf_nlos} presents the empirical \gls{cdf} of the \gls{pg} for \gls{cmimo}. We compare the performances of Case 1 (\gls{cmimo} with \gls{dbf}) and Case 2 (\gls{cmimo} with \gls{ostbc})  under \gls{nlos} scenarios. 
The \gls{nlos} conditions are created using either two concrete walls or two metal reflectors located at $x=0$ and $x=10$.

From Fig. \ref{fig:cdf_nlos}, we can conclude that \glspl{smc} increase the variance of the \gls{pg} distribution for Case 1. 
For example, the performance of Case 1 with concrete is worse than Case 1 with \gls{los} with approximately $0.1$ probability.
More importantly, \gls{cmimo} with \gls{ostbc} outperforms \gls{cmimo} with \gls{dbf} across all scenarios in \gls{nlos} channel although they perform similarly in \gls{los} channel.
\footnote{Please note that, the average performance of the random \gls{bf} will also be the same with that of \gls{ostbc}. For random \gls{bf}, each \gls{bf} coefficients can be generated as $\exp(j2\pi \phi)$, with $\phi \sim \mathcal{U}[0, 2\pi)$ following a uniform distribution.}
However, note that \gls{cmimo} with \gls{ostbc} requires multiple time slots or subcarriers to achieve this performance. 

\subsubsection{NLoS Wideband Channel for C-MIMO} \label{sec:wideband}

Fig. \ref{fig:cdf_nlos_wb} presents a regenerated version of Fig. \ref{fig:cdf_nlos} for the \gls{wb} channel. 
As shown in Fig. \ref{fig:cdf_nlos_wb}, the performance gap in the \gls{pg} distributions of \gls{cmimo} with \gls{ostbc} and with \gls{dbf} decreases in the case of the \gls{wb} channel. However, \gls{cmimo} with \gls{ostbc} still performs best.

\subsubsection{NLoS Wideband and Narrowband Channel for D-MIMO} 

\begin{figure}[tbp]
	\centering
	\includegraphics[width = 0.6\linewidth]{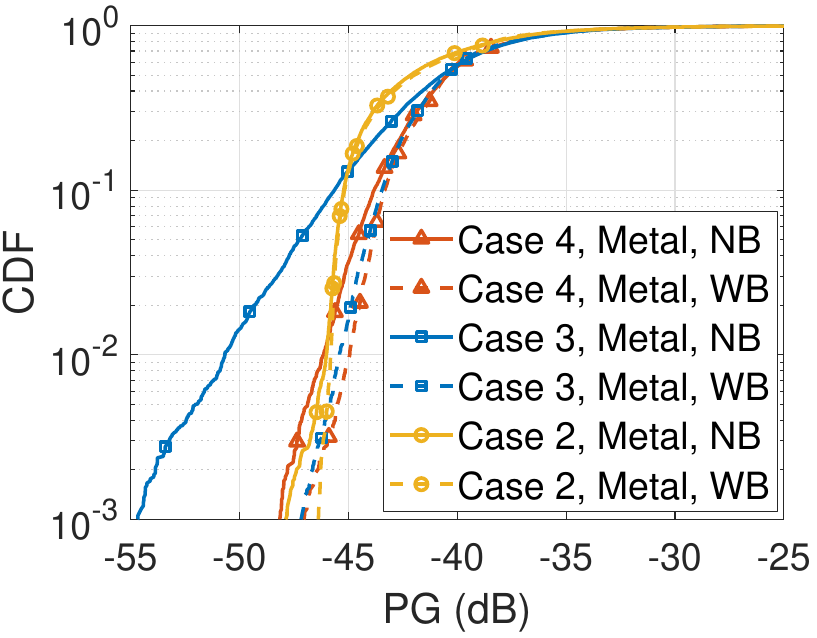}
	\caption{Empirical CDF in NLoS channel for D-MIMO.}
	\label{fig:cdf_dmimo_nlos}
    \vspace{-5pt}
\end{figure}

Fig. \ref{fig:cdf_dmimo_nlos} shows the \gls{pg} distribution for \gls{dmimo} in the \gls{nlos} channel for both \gls{nb} and \gls{wb}. As shown in the figure, Case 4 outperforms Case 3 in \gls{nb}, but the improvement is minimal in \gls{wb}. In addition, Case 4 achieves better \gls{pg} distribution compared to Case 2 (\gls{cmimo}) on average.

Table \ref{tab:results_summary} summarizes the results. Note that, for \gls{nlos} \gls{nb} channel, the relative performance of the cases may vary slightly depending on the carrier frequency since the fading points in the service area change. However, the results in Table \ref{tab:results_summary} remain unaffected as the \gls{pg} is calculated over a large area.

\begin{table}[tbp]
\centering
\caption{The summary of the results.\vspace{-5pt}}
\label{tab:results_summary}
\small
\begin{tabular}{|c|cc|}
\hline
\textbf{C-MIMO} & \multicolumn{1}{c|}{\textbf{Narrowband}} & \textbf{Wideband} \\ \hline
\textbf{LoS} & \multicolumn{2}{c|}{Cases 1 and 2 perform similarly.} \\ \hline
\textbf{NLoS} & \multicolumn{2}{c|}{\cellcolor[HTML]{C7F5C8}Case 2 (OSTBC) is better.} \\  \hline \hline
\textbf{D-MIMO} & \multicolumn{1}{c|}{\textbf{Narrowband}} & \textbf{Wideband} \\ \hline
\textbf{LoS} & \multicolumn{1}{c|}{\cellcolor[HTML]{C7F5C8}} &  \\ \cline{1-1}
\textbf{NLoS} & \multicolumn{1}{c|}{\multirow{-2}{*}{\cellcolor[HTML]{C7F5C8}\begin{tabular}[c]{@{}c@{}}Case 4 is better \\ than Case 3.\end{tabular}}} & \multirow{-2}{*}{\begin{tabular}[c]{@{}c@{}}Cases 4 and 3 \\ perform similarly.\end{tabular}} \\ \hline
\end{tabular}%
\end{table}

\section{Conclusion}
In this paper, we analyzed the performance of \gls{dbf} method for various \gls{mimo} configurations in both \gls{los} and \gls{nlos} channels. We modeled the \gls{nlos} channel considering the electric field directions and the reflection coefficients for different environmental materials. We show that \gls{cmimo} with \gls{ostbc} achieves better \gls{pg} distribution compared to the \gls{cmimo} with \gls{dbf} both in \gls{nb} and \gls{wb} \gls{nlos} channel, despite their similar performance in \gls{los} channel. However, \gls{ostbc} requires multiple time slots or subcarriers.
We also proposed the \gls{bf} method for broadcasting information in a \gls{dmimo} setup. We showed that the \gls{dmimo} with the proposed \gls{bf} coefficients is better than the benchmark scenario for \gls{dmimo} in \gls{nb} scenario and also achieves better \gls{pg} distribution compared to \gls{cmimo} due to the macro-diversity effect.


\linespread{0.98}\selectfont
\bibliographystyle{IEEEtran}
\bibliography{references}

\end{document}